\theoremstyle{plain}
\numberwithin{equation}{section}
\newtheorem{theorem}{Theorem}[section]
\newtheorem{lemma}[theorem]{Lemma}
\newtheorem{proposition}[theorem]{Proposition}
\theoremstyle{definition}
\newtheorem{definition}[theorem]{Definition}
\newtheorem{notation}[theorem]{Notation}
\newtheorem{example}[theorem]{Example}
\theoremstyle{remark}
\numberwithin{equation}{section}
\newcommand{\R}{{\mathbb R}}
\newcommand{\cB}{{\mathcal B}}
\newcommand{\cH}{{\mathcal H}}
\newcommand{\cL}{{\mathcal L}}
\newcommand{\cX}{{\mathcal X}}
\newcommand{\cC}{{\mathcal C}}
\newcommand{\cN}{{\mathcal N}}
\newcommand{\cM}{{\mathcal M}}
\newcommand{\cK}{{\mathcal K}}
\newcommand{\cY}{{\mathcal Y}}
\newcommand{\cI}{{\mathcal I}}
\newcommand{\cU}{{\mathcal U}}
\newcommand{\cV}{{\mathcal V}}
\newcommand{\cE}{{\mathcal E}}
\newcommand{\cD}{{\mathcal D}}
\newcommand{\ket}[1]{\left\vert #1\right\rangle}
\newcommand{\bra}[1]{\left\langle #1\right\vert}
\newcommand{\Tr}{\mathrm{Tr}}
\newcommand{\be}{\begin{equation}}
\newcommand{\ee}{\end{equation}}
\newcommand{\bea}{\begin{eqnarray}}
\newcommand{\eea}{\end{eqnarray}}
\newcommand{\beann}{\begin{eqnarray*}}
\newcommand{\eeann}{\end{eqnarray*}}
\begin{document}

\title{Dynamical Coherence Measures}
\author{Anna Vershynina}
\affil{\small{Department of Mathematics, Philip Guthrie Hoffman Hall, University of Houston, 
3551 Cullen Blvd., Houston, TX 77204-3008, USA}}
\renewcommand\Authands{ and }
\renewcommand\Affilfont{\itshape\small}

\date{\today}

\maketitle

\begin{abstract}
We present three measures of the dynamical coherence of channels, which are the generalization of several previous results. The measures are based on the generalized distance function between channels, which for example could be the divergence or a trace-distance. Free operations are considered to be detection-incoherent, creation-incoherent, and detection-creation incoherent. Free superchannels can be expressed using free pre- and post-processing channels. The new measures are monotone under free superchannels and convex.\end{abstract}

\section{Introduction}

Quantum coherence describes the existence of quantum interference, and it is often used in thermodynamics \cite{A14, C15, L15}, transport theory \cite{RM09, WM13}, and quantum optics \cite{G63, SZ97}, among few applications. Quantum coherence resource theory starts with free, i.e. incoherent, states, which are diagonal states in a pre-fixed basis. Free operations are some quantum channels that do not create coherence where it was absent, in other words, map the set of incoherent states to itself. Problems involving coherence included quantification of coherence \cite{BC14, RPL16, R16, SX15, V22, V23, YZ16}, distribution \cite{RPJ16}, entanglement \cite{CH16, SS15}, operational resource theory \cite{ChG16-2, CH16, DBG15, WY16},  and correlations \cite{HH18, MY16, TK16}. See \cite{SAP17} for a more detailed review.

Relatively recently, static resource theories (i.e. the one mostly concerned with states and their manipulation) have been extended to regard quantum channels as the elementary generalized resource, leading to a wide open area of research of the dynamical resource theory. Static resource theory has three main components: free states, free operations and resource measures. In analogue, dynamic resource theory must have free channels, free superchannels and resource measures. Note that the dynamical theory is a generalization of the statical one since any state can be regarded as a quantum channel mapping a trivial state to a given one. Much progress has been focused on the development of the theory of entropic quantifiers of channels and operational resource theory \cite{ChDAP08, GW21, LY20, SCG20, TEZP19, Y18}, to name a few references.

In \cite{Y18} it was shown that when the distance is measured by the relative entropy of channels (see Definition \ref{def:div}), the minimal distance to the free channels is a coherence measure of channels (see Definition \ref{def:coh-meas-dyn}). Denote $\cC$ as a set of free channels, taken to be either detection-incoherent (DI), creation-incoherent (CI), or detection-creation incoherent (DCI) (see Definition \ref{def:DCI}). Then it was shown that for a relative entropy of channels $D_{re}$, 
\begin{equation}
C^{\cC}(\cN)=\min_{\cK\in \cC}D_{re}(\cN,\cK)\ .
\end{equation}
is a coherence measure of channels. This measure is monotone for superchannels with free pre- and post-processing channels (i.e. of the form (\ref{eq:free-super-one})). 

In \cite{TEZP19} authors considered a norm $\|\cdot\|$ that is sub-multiplicative on quantum channels, sub-multiplicative with respect to the tensor products, and such that $\|\cK\|\leq 1$ for any detection-incoherent $\cK$ (e.g. a diamond norm, see Definition \ref{def:diamond-norm}), and showed that the following functional
\begin{equation}\label{eq:norm-intro}
C^{DI}(\cN)=\min_{\cK\in DI}\|\Delta\cN-\Delta\cK \| \ 
\end{equation}
is a coherence measure of channels. Here $\Delta$ is a completely dephasing operation (see Notation \ref{notation:cdo}). Additionally, it was shown that  (\ref{eq:norm-intro}) is a coherence measure when $\|\cdot\|$ is a trace-one norm (see Definition \ref{def:trace-norm}). The operational interpretation of these measures is that in a one-shot regime they appear in the probability to guess correctly if one obtained the target channel or the least distinguishable free operation, provided we can use only free measurements for the trace-one norm, and  non-free measurement are allowed on the auxiliary system for the diamond norm.

We generalize both of the above norms, and show that the following functionals are coherence measures:
\begin{align}
&C_f^{DI}(\cN)=\min_{\cK\in DI}f(\Delta\cN,\Delta\cK),\label{def:DI-f-intro}\\
&C_f^{CI}(\cN)=\min_{\cK\in CI}f(\cN\Delta,\cK\Delta),\label{def:CI-f-intro}\\
&C_f^{DCI}(\cN)=\min_{\cK\in DCI}f(\cN,\cK). \label{def:DCI-f-intro}
\end{align}
Here $f$ is a function acting as a distance between channels satisfying a particular list of properties, see Notation \ref{def:f}. As an example, this function can be taken to be a relative entropy of channels, trace-one distance, or a diamond distance. It could also be a quantum divergence 

In Chapter 2, we start with common notions and notations: relative entropy/divergence of channels, entropy of channels, static and the dynamic coherence theories. 

In Chapter 3, prove that three measures (\ref{def:DI-f-intro}), (\ref{def:CI-f-intro}), (\ref{def:DCI-f-intro}) are coherence measures of channels.

\section{Preliminaries}

In this paper, we use the following notations and definitions: all Hilbert spaces are finite dimensional. We denote them as $\cH_A, \cH_B, \dots$, where subscripts indicate the corresponding system. Sometimes we drop the subscript if the statement is true for any Hilbert space or it is evident which system we are talking about. For a Hilbert space $\cH$, we denote by $\cL(\cH)$ the space of all linear maps on $\cH$, and as $\cB(\cH)$ the space of all bounded liner operators on $\cH$.

A Hilbert space corresponding to multiple systems, e.g. AB, is a tensor product of individual subsystems, e.g. $\cH_A\otimes\cH_B$. 

For a Hilbert space $\cH_A$, its dimension is denoted as $|A|:=\dim\cH_A$. A quantum state or a density operator $\rho_A\in\cB(\cH_A)$ on a Hilbert space $\cH_A$ is a positive semidefinite, trace-normalized operator, i.e. $\rho_A\geq 0$ and $\Tr\,\rho_A=1$. A state is pure if it is rank-one. A pure state $\psi_A$ has an associated  vector $\ket{\psi}_A\in\cH_A$ such that $\langle{\psi}|\psi\rangle=1$ and $\psi_A=\ket{\psi}\bra{\psi}_A$. The set of all density operators or states on $\cH_A$ is denoted as $\cD(\cH_A)$. For a generic Hilbert space $\cH$, we denote $\cD=\cD(\cH)$.
 
A quantum channel $\cN:A\rightarrow B$ is a linear completely-positive trace-preserving (CPTP) map from $\cD(\cH_A)$ to $\cD(\cH_B)$. Sometimes the channel can also be denoted as $\cN_{A\rightarrow B}$. The identity channel on $\cD(\cH_A)$ is denoted as $I_A$. The subscript in the channels is dropped if it is evident which systems are involved. If systems $A$ and $B$ are the same, then the channel $\cN_{A\rightarrow A}$ is denoted as $\cN_A$. A composition of channels $\cN:A\rightarrow B$ and $\cK:B\rightarrow C$ is denoted as $\cK\circ\cN$, although sometimes we use notation $\cK\cN$ to simplify the formula.

A {superchannel} \cite{ChDAP08} $\Lambda$ transforms a quantum channel $\cN_{A\rightarrow B} $ to a channel from $C$ to $D$ as follows
\begin{equation}\label{def:superch}
\Lambda(\cN_{A\rightarrow B})_{C\rightarrow D}=\cM_{BE\rightarrow D}\circ(\cN_{A\rightarrow B}\otimes I_E)\circ \cK_{C\rightarrow AE}\ , 
\end{equation}
with the ancillary system $E$, and channels $\cM_{BE\rightarrow D}$ and $\cK_{C\rightarrow AE}$.

\subsection{Divergence of channels}

We consider a class of the following divergences, which is a distinguishability measure between quantum states.

\begin{definition}\label{def:div-div} A function $D:\cD\times\cD\rightarrow \R$ on a set of pairs of states is a (generalized convex) {\it divergence} if
\begin{itemize}
\item (Non-negativity) $D(\rho\|\sigma)\geq 0$, and $D(\rho\|\sigma)=0$ if and only if $\rho=\sigma$.
\item (Data processing/monotonicity). For any quantum channel $\cN$, it holds that $D(\cN(\rho)\|\cN(\sigma))\leq D(\rho\|\sigma)$.
\item (Stability) $D(\rho\otimes\tau\|\sigma\otimes\tau)=D(\rho\|\sigma)$.
\item (Joint convexity) For $0\leq \lambda\leq 1$, and quantum states $\rho_i, \sigma_i$, it holds that
$$D\left(\lambda\rho_1+(1-\lambda)\rho_2\|\lambda\sigma_1+(1-\lambda)\sigma_2\right)\leq \lambda D(\rho_1\|\sigma_1)+(1-\lambda)D(\rho_2\|\sigma_2)\ . $$ 
\end{itemize}
\end{definition}
Note that even though we explicitly state the stability property, it is a consequence of the monotonicity property, since both tensoring and the partial trace are CPTP maps (i.e. quantum channels) \cite{WWY14}. Moreover, from the monotonicity property we obtain the invariance under unitaries, i.e. $D(\rho\|\sigma)=D(U\rho U^*\|U\sigma U^*)$ for any unitary $U$.

\begin{example}
(Umegaki) Quantum relative entropy, $D_{re}(\rho\|\sigma)=\Tr(\rho\log\rho-\rho\log\sigma)$, and trace-distance, $\|\rho-\sigma\|_1=\Tr|\rho-\sigma|$, are examples of a generalized convex divergence. 
\end{example}

\begin{definition}\label{def:div}
Let $D$ be a divergence from Definition \ref{def:div-div}. Then a {\it quantum divergence of channels} $\cN_{A\rightarrow B}$ and $\mathcal{M}_{A\rightarrow B}$ based on $D$ is defined as 
\begin{equation}\label{def:qre-channel}
D(\cN \| \cM)=\sup_{\rho_{AR}} D(\cN\otimes I(\rho)\| \cM\otimes I(\rho))\ .
\end{equation}
Here the supremum is taken over all systems $R$ of any dimension and all states $\rho_{AR}$. However, it is sufficient to consider only pure states $\rho_{AR}$ with system $R$ being isomorphic to system $A$, because of the state purification, the data-processing inequality, and the Schmidt decomposition theorem. 
\end{definition}

Taking Umegaki relative entropy $D_{re}$ in the definition of the quantum divergence of channels, it becomes a relative entropy of channels, which was first proposed in \cite{CMM18}, and generalized in \cite{LKDW18}. 

When divergence is a trace-distance, then the divergence of channels is called a {\it diamond-distance} of channels, which we discuss below.

The quantum divergence of channels satisfies the following properties, which directly follows from the proofs in \cite{Y18}:
\begin{itemize}
\item (Non-negativity) $D(\cN\|\cM)\geq 0$ and $D(\cN\|\cM)=0$ if and only if $\cN=\cM$.
\item (Weak monotonicity) For any quantum channels $\mathcal{K}_i$, we have 
$$D(\mathcal{K}_1\circ\cN\circ \mathcal{K}_2\|\mathcal{K}_1\circ\cM\circ \mathcal{K}_2 )\leq D(\cN\|\cM)\ .$$
\item (Strong monotonicity) For any super-channel $\Lambda$, 
$$D(\Lambda(\cN)\|\Lambda(\cM))\leq D(\cN\|\cM)\ .$$
\item (Joint convexity) For $0\leq\lambda\leq 1$, and quantum channels $\cN_i, \cM_i$, we have $$D\left(\lambda\cN_1+(1-\lambda)\cN_2\|\lambda\cM_1+(1-\lambda)\cM_2\right)\leq \lambda D(\cN_1\|\cM_1)+(1-\lambda)D(\cN_2\|\cM_2)\ .$$
\item (Stability) $D( \cN\otimes I\| \cM\otimes I)=D(\cN\|\cM)$.
 \end{itemize}
 The properties of the quantum divergence of channels is discussed in Chapter 3. 

\subsection{Trace and diamond norms}

\begin{definition}\label{def:trace-norm} \cite{Wat18} The {\it trace-norm of a linear map} $\cX:\cL(\cH_A)\rightarrow\cL(\cH_B)$ is defined as 
\begin{equation}\label{def:trace-norm-channel}
\|\cX\|_1=\max \{ \|\cX(X)\|_1\ :\ X\in\cL(\cH_A),\ \|X\|_1\leq 1\}  \ ,
\end{equation}
where the trace-norm on $\cL(\cH)$ is defined as $\|M\|_1=\Tr\sqrt{M^*M}$ for $M\in\cL(\cH)$. Note that the maximum above is always achieved by a rank-one operator (Proposition 3.38 \cite{Wat18}). For a positive map the maximum is achieved by a rank-one projection.
\end{definition}
\begin{definition}
The {\it trace-distance of quantum channels} $\cN, \cM:A \rightarrow B$ is defined as 
\begin{equation}\label{def:trace-norm-channel}
\|\cN -\cM\|_1=\max_{\rho\in\cD(\cH_A)} \|\cN(\rho) - \cM (\rho)\|_1\ .
\end{equation}
\end{definition}

The trace-norm of a quantum channel has the following properties, (for more properties see \cite{Wat18}):
\begin{enumerate}
\item The trace-norm of a quantum channel $\cN$ is one, i.e. $\|\cN\|_1=1$.
\item For all linear maps $\cX: \cL(\cH_A)\rightarrow\cL(\cH_B)$, $\cY: \cL(\cH_B)\rightarrow\cL(\cH_C)$, it holds $\|\cY\cX\|_1\leq \|\cY\|_1\|\cX\|_1$. 
\item For all channels $\cN_1,\cM_1:A\rightarrow B$ and $\cN_2, \cM_2:B\rightarrow C$, it holds that
$$\|\cN_2\cN_1-\cM_2\cM_1 \|_1\leq \|\cN_1-\cM_1\|_1+\|\cN_2-\cM_2\|_1\ .$$
\end{enumerate}

\begin{definition}\label{def:diamond-norm} \cite{Wat18}
The {\it diamond norm (or the completely bounded trace-norm) of a linear map} $\Omega: \cL(\cH_A)\rightarrow\cL(\cH_B)$ is defined as 
$$\|\Omega_{A\rightarrow B}\|_\diamond=\|\Omega_{A\rightarrow B}\otimes I_R\|_1\ , $$
where $\dim R=\dim A$.
\end{definition}

The diamond norm satisfies the following properties in particular, \cite{Wat18}:
\begin{enumerate}
\item The diamond norm of a quantum channel $\cN$ is one, i.e. $\|\cN\|_\diamond=1$. 
\item Since trace-norm is monotone under quantum channels, in particular partial traces, we get that for any linear map $\Omega$, 
$$ \|\Omega\|_1\leq \|\Omega\|_\diamond\ . $$
\item (Sub-multiplicativity) for any linear maps $\Omega: \cL(\cH_B)\rightarrow\cL(\cH_C)$ and $\Sigma: \cL(\cH_A)\rightarrow\cL(\cH_B)$
$$\|\Omega\Sigma\|_\diamond\leq \|\Omega\|_\diamond\|\Sigma\|_\diamond\ . $$
\item For all channels $\cN_1,\cM_1:A\rightarrow B$ and $\cN_2, \cM_2:B\rightarrow C$, it holds that
$$\|\cN_2\cN_1-\cM_2\cM_1 \|_\diamond\leq \|\cN_1-\cM_1\|_\diamond+\|\cN_2-\cM_2\|_\diamond\ .$$
\item (Multiplicativity under tensor products) for any linear maps $\Omega: \cL(\cH_A)\rightarrow\cL(\cH_B)$ and $\Sigma: \cL(\cH_C)\rightarrow\cL(\cH_D)$
$$\|\Omega\otimes\Sigma\|_\diamond= \|\Omega\|_\diamond\|\Sigma\|_\diamond\ . $$
\item (Monotonicity under superchannels)  Let $\Lambda$ be a superchannel (\ref{def:superch}). Then from the above propertires it follows that
$$\|\Lambda(\cN)\|_\diamond\leq \|\cM\|_\diamond \|\cN\otimes I\|_\diamond\|\cK\|_\diamond=\|\cN\|_\diamond \ ,$$
since $\cM$ and $\cK$ are channels,  their diamond norm is one, and since $\|\cN\otimes I\|_\diamond=\|\cN\|_\diamond$.
\end{enumerate}

\begin{definition}
Taking the diamond-distance as a divergence in (\ref{def:div}) defines a {\it diamond-distance of channels},
\begin{equation}\label{def:qre-channel}
\|\cN - \cM\|_\diamond=\max_{\rho} \|\cN\otimes I(\rho)- \cM\otimes I(\rho)\|_1\ .
\end{equation}
\end{definition}

Recall that for the Umegaki relative entropy the Pinsker's inequality for states holds: for any  states $\rho, \sigma$, we have
$$D(\rho\|\sigma)\geq \frac{1}{2}\|\rho-\sigma\|_1^2\ . $$

Straight from definition of the quantum relative entropy of channels,  (\ref{def:qre-channel}), and the Pinsker's inequality, the Pinsker's inequality for channels holds.
\begin{proposition}
For any quantum channels $\cN, \cM$, the Pinsker's inequality for channels holds
$$D_{re}(\cN\|\cM)\geq \frac{1}{2}\|\cN-\cM\|_\diamond^2\geq \frac{1}{2}\|\cN-\cM\|_1^2\ . $$
Here $D_{re}$ is the relative entropy of channels based on the Umegaki relative entropy.
\end{proposition}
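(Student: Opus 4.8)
The plan is to reduce the channel statement to the state-level Pinsker inequality, exploiting that the relative entropy of channels and the diamond distance of channels are built by optimizing over the \emph{same} family of input states $\rho_{AR}$ (by the reduction noted in Definition \ref{def:div}, it suffices to take pure states with $R\cong A$ in both). The right inequality is then essentially a norm comparison, while the left inequality is a single application of Pinsker for states evaluated at the optimal input.

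For the left inequality $D_{re}(\cN\|\cM)\geq \frac12\|\cN-\cM\|_\diamond^2$: since all Hilbert spaces are finite dimensional, the maximum defining $\|\cN-\cM\|_\diamond=\max_{\rho}\|\cN\otimes I(\rho)-\cM\otimes I(\rho)\|_1$ is attained, say at some $\rho^*_{AR}$. Both $\cN\otimes I(\rho^*)$ and $\cM\otimes I(\rho^*)$ are genuine density operators, so the state-level Pinsker inequality yields
$$D_{re}(\cN\otimes I(\rho^*)\,\|\,\cM\otimes I(\rho^*))\ \geq\ \frac12\|\cN\otimes I(\rho^*)-\cM\otimes I(\rho^*)\|_1^2\ =\ \frac12\|\cN-\cM\|_\diamond^2.$$
Because $\rho^*$ is an admissible input in the supremum defining $D_{re}(\cN\|\cM)$ in Definition \ref{def:div}, the left-hand side is bounded above by $D_{re}(\cN\|\cM)$, which closes this step.

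The right inequality $\frac12\|\cN-\cM\|_\diamond^2\geq\frac12\|\cN-\cM\|_1^2$ follows immediately from the diamond-norm property $\|\Omega\|_1\leq\|\Omega\|_\diamond$, valid for any linear map $\Omega$: take $\Omega=\cN-\cM$, observe that both quantities are non-negative, square, and multiply by $\tfrac12$.

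I expect no real obstacle here; the single point demanding care is the compatibility of the two optimizations — one must confirm that the state maximizing the diamond distance is admissible in the supremum defining $D_{re}(\cN\|\cM)$. This holds precisely because both quantities arise from the divergence-of-channels construction over the identical state family, so inserting the diamond-distance maximizer into the relative-entropy supremum is legitimate.
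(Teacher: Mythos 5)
Your proof is correct and matches the paper's intended argument: the paper simply asserts that the proposition follows ``straight from the definition of the quantum relative entropy of channels and the Pinsker's inequality,'' and your write-up is exactly that reduction made explicit (evaluate the state-level Pinsker inequality at the diamond-distance maximizer, which is admissible in the supremum defining $D_{re}(\cN\|\cM)$, and use $\|\Omega\|_1\leq\|\Omega\|_\diamond$ for the second inequality). No gaps.
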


\subsection{Static resource theory of coherence}

Let $\cH$ be a $d$-dimensional Hilbert space. Let us fix an orthonormal basis $\cE=\{\ket{j}\}_{j=1}^d$ of vectors in $\cH$.
\begin{definition} A state $\delta$ is called {\it incoherent} in $\cE$ if it can be represented as follows
$\delta=\sum_j \delta_j\ket{j}\bra{j}. $ Denote the set of {\bf incoherent states} for a fixed basis $\cE=\{\ket{j}\}_j$ as $\cI_\cE=\{\rho=\sum_jp_j\ket{j}\bra{j}\}.$ We drop the subscript $\cE$ from now on.
\end{definition}

\begin{notation}\label{notation:cdo}
The {\it completely dephasing operator}  $\Delta_\cE: \mathcal{D}(\cH)\rightarrow \cI$ is defined as
\begin{equation}\label{def:dephasing}
\Delta_\cE(\rho)=\sum_j \bra{j}\rho\ket{j} \ket{j}\bra{j}\ ,
\end{equation}
Sometimes we drop the subscript $\cE$, or use a subscript denoting the system this operator acts on.
\end{notation}

Any quantum channel $\Phi$ admits a Kraus decomposition (which may not be unique) of the form
\begin{equation}\label{eq:Kraus}
\cN(\rho)=\sum_n K_n \rho K_n^*\ , \ \ \sum_n K_n^*K_n=I\ .
\end{equation}

\begin{definition} 
A quantum channel $\cN$ on $\cD(\cH)$ is an {\it incoherent operation (IO)}, if there exists a Kraus decomposition (\ref{eq:Kraus}) such that  $K_n \cI K_n^*\subset \cI,\ \text{for all }n.$
\end{definition}

\begin{definition}
A {\it measure of coherence} $\cC(\rho)$ should satisfy the following conditions
\begin{itemize}
\item (C1) $\cC(\rho)\geq 0$, and $\cC(\rho)=0$ if and only if $\rho\in\cI$;
\item (C2) Non-selective monotonicity under IO (monotonicity): for all IO $\cN$ and all states $\rho$,
$$\cC(\rho)\geq \cC(\cN(\rho))\ ; $$
\item (C3) Selective monotonicity under IO (strong monotonicity): for all IO $\cN$ with Kraus operators $K_n$, and all states $\rho$,
$$\cC(\rho)\geq \sum_n p_n \cC(\rho_n)\ , $$
where $p_n$ and $\rho_n$ are the outcomes and post-measurement states
$$\rho_n=\frac{K_n\rho K_n^*}{p_n},\ \ p_n=\Tr K_n\rho K_n^*\ . $$
\item (C4) Convexity, 
$$\sum_n p_n \cC(\rho_n)\geq \cC\left(\sum_n p_n\rho_n\right)\ , $$
for any sets of states $\{\rho_n\}$ and any probability distribution $\{p_n\}$.
\end{itemize}
\end{definition}
These properties are parallel with the entanglement measure theory, where the average entanglement is not increased under the local operations and classical communication (LOCC). Notice that coherence measures that satisfy conditions (C3) and (C4) also satisfies condition (C2) \cite{BC14}. 

While properties (C2) and (C3) are formulated for incoherent operations, one may consider different classes of incoherent operations (such as maximally incoherent operations, strictly incoherent operations, and many others, see for example  \cite{ChG16, ChG16-2, DevS16} for their comparison and analysis).

\subsection{Dynamic resource theory of coherence}

Quantum channels now play the role of states in the static resource theory. While there is only one commonly agreed set of free incoherent states (diagonal in the pre-fixed basis), there are multiple way one can consider free channels in the dynamic resource theory, such as classical channels, or detection/creation/detection-creation incoherent channels \cite{SCG20, TEZP19}. 

Let $\cN:A\rightarrow B$ be a quantum channel, and let us fix bases $\cE_A$ and $\cE_B$ in systems $A$ and $B$ respectively. Then $\Delta_{\cE_A}$ and $\Delta_{\cE_B}$ denote the completely dephasing operators (\ref{def:dephasing}) in systems $A$ and $B$ resp.

\begin{definition}\label{def:DCI}\cite{SCG20, TEZP19}
A quantum channel $\cN_{A\rightarrow B}$ is
\begin{itemize}
\item {\it detection incoherent (DI)} iff $\Delta_{\cE_B}\cN_{A\rightarrow B}=\Delta_{\cE_B}\cN_{A\rightarrow B}\Delta_{\cE_A},$ 
\item {\it creation incoherent (CI)} iff $\cN_{A\rightarrow B}\Delta_{\cE_A}=\Delta_{\cE_B}\cN_{A\rightarrow B}\Delta_{\cE_A},$
\item {\it detection-creation incoherent (DCI) or DIO} iff $\Delta_{\cE_B}\cN_{A\rightarrow B}=\cN_{A\rightarrow B}\Delta_{\cE_A}$.
\end{itemize}
See \cite{TEZP19} for more detailed information on the structure of these channels.
\end{definition}

A set of free superchannels can be defined in multiple ways \cite{SCG20, TEZP19, Y18}. 
Similar to \cite{TEZP19, Y18}, we consider the following free superchannels. 

\begin{notation}
A superchannel $\Lambda$ is free if there is a decomposition 
\begin{equation}\label{eq:free-super-one}
\Lambda(\cN_{A\rightarrow B})=\cK_{BE\rightarrow D}\circ(\cN_{A\rightarrow B}\otimes I_{E})\circ\cM_{C\rightarrow AE}\ ,
\end{equation}
where $\cM_{C\rightarrow AE}, \cK_{BE\rightarrow D}$ are free channels taken to be either DI, CI, or DCI.
\end{notation}

\begin{definition}\label{def:coh-meas-dyn} \cite{TEZP19, Y18}
{\it Coherence measures} are real-valued functions $C$ of channels $\cN$ satisfying the following conditions:
\begin{itemize}
\item (C1) {Non-negativity:} $C(\cN)\geq 0$ and $C(\cN)=0$ for free channels $\cN$.
\item (C2) {Monotonicity:} for free superchannels $\Lambda$, the coherence measure is non-increasing, $C(\Lambda(\cN))\leq C(\cN)$.
\end{itemize}
Sometimes other conditions are included as well, for example:
\begin{itemize}
\item (C3) Convexity: for channels $\cN, \cK$ and $0\leq \lambda\leq 1$, it holds $C(\lambda \cN+(1-\lambda)\cK)\leq \lambda C(\cN)+(1-\lambda) C(\cK).$
\end{itemize}
\end{definition}

\section{Coherence measures of channels}

\begin{notation}\label{def:f}
Let function $f(\cN,\cM)$ between two channels $\cN, \cM$, viewed as a distance between them, be such that 
\begin{itemize}
\item (Non-negativity) For quantum channels $\cN, \cM$,  it holds that $f(\cN,\cM)\geq 0$, and $f(\cN,\cM)=0$ if and only if $\cN=\cM$.
\item (Weak monotonicity) For quantum channels $\cV,\cU,\cN, \cM$, the function is non-increasing under composition $f(\cV\circ\cN\circ\cU, \cV\circ\cM\circ\cU)\leq f(\cN, \cM)$. 
\item (Joint convexity). For $0\leq \lambda\leq 1$, it holds that $f\left(\lambda\cN_1+(1-\lambda)\cN_2, \lambda\cM_1+(1-\lambda)\cM_2\right)\leq \lambda f(\cN_1, \cM_1)+(1-\lambda)f(\cN_2,\cM_2)$.
\item (Monotone under tensor product with the identity operator). For quantum channels $\cN, \cM$,  it holds that $f(\cN\otimes I,\cM\otimes I)\leq f(\cN, \cM)$.
\item (Monotone under tensor product with the dephasing operator). For quantum channels $\cN, \cM$, and the completely dephasing operator $\Delta$, it holds that $f(\cN\otimes\Delta,\cM\otimes \Delta)\leq f(\cN, \cM)$.
\end{itemize}
\end{notation}

The following lemma can be obtained from the proof of Theorem 9 in \cite{TEZP19}.
\begin{lemma}
Trace-one distance of channels, $f(\cN,\cM)=\|\cN-\cM\|_1$, satisfy the above conditions.
\end{lemma}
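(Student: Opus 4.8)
The plan is to verify the five conditions of Notation \ref{def:f} one at a time for $f(\cN,\cM)=\|\cN-\cM\|_1=\max_{\rho}\|\cN(\rho)-\cM(\rho)\|_1$, disposing of the algebraic conditions first and isolating the two tensor conditions at the end. \textbf{Non-negativity} is immediate: each $\|\cN(\rho)-\cM(\rho)\|_1\geq 0$, so the maximum is non-negative, and it vanishes iff $\cN(\rho)=\cM(\rho)$ for every state $\rho$; since density operators span $\cL(\cH_A)$ and both maps are linear, this forces $\cN=\cM$. \textbf{Joint convexity} follows from the triangle inequality for $\|\cdot\|_1$ applied pointwise in $\rho$: write the difference of the convex combinations as $\lambda(\cN_1-\cM_1)(\rho)+(1-\lambda)(\cN_2-\cM_2)(\rho)$, bound its trace norm, and then use that the maximum of a sum is at most the sum of the maxima.

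For \textbf{weak monotonicity} I would use the two facts already recorded for the trace-norm of channels: the trace norm is contractive under any channel, and pre-composition only shrinks the set of admissible inputs. Concretely, $f(\cV\cN\cU,\cV\cM\cU)=\max_\rho\|\cV\big((\cN-\cM)(\cU(\rho))\big)\|_1\leq \max_\rho\|(\cN-\cM)(\cU(\rho))\|_1$ by contractivity of $\cV$ on Hermitian differences, and the right-hand side is at most $\max_\sigma\|(\cN-\cM)(\sigma)\|_1=f(\cN,\cM)$ because $\{\cU(\rho)\}\subseteq\cD(\cH_A)$.

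The first genuinely nontrivial tensor condition is \textbf{monotonicity under tensoring with the dephasing operator}, and here I would exploit that $\Delta$ kills coherence in the ancilla. Writing $\Omega=\cN-\cM$ and factoring $\Omega\otimes\Delta=(\Omega\otimes I)\circ(I\otimes\Delta)$, a general input $\sigma_{AE}$ is first sent by $I\otimes\Delta$ to the ancilla-block-diagonal state $\sum_j p_j\,\rho^{(j)}\otimes\ket{j}\bra{j}$, after which $\Omega\otimes I$ acts blockwise to give $\sum_j p_j\,\Omega(\rho^{(j)})\otimes\ket{j}\bra{j}$. Since the blocks are supported on orthogonal subspaces of the ancilla, the trace norm is additive across them, so $\|(\Omega\otimes\Delta)(\sigma)\|_1=\sum_j p_j\|\Omega(\rho^{(j)})\|_1\leq\big(\sum_j p_j\big)\max_\rho\|\Omega(\rho)\|_1=f(\cN,\cM)$, using $\sum_j p_j=1$; taking the maximum over $\sigma$ finishes this case.

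\textbf{The step I expect to be the main obstacle is monotonicity under tensoring with the identity}, $f(\cN\otimes I,\cM\otimes I)\leq f(\cN,\cM)$. The natural attempt is to copy the dephasing argument, but an un-dephased ancilla can be entangled with the input, and for a pure entangled $\ket{\psi}_{AE}$ the quantity $\|(\Omega\otimes I)(\ket{\psi}\bra{\psi})\|_1$ is exactly the object whose maximum defines the \emph{diamond} distance, which by the inequality $\|\Omega\|_1\leq\|\Omega\|_\diamond$ can strictly exceed $f(\cN,\cM)$. In other words, tensoring with the identity generically \emph{raises} the un-stabilized trace-one distance, so the estimate cannot be obtained from the blockwise reasoning above and does not follow from contractivity alone. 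I would therefore pursue one of two routes: either (i) establish an extra submultiplicativity-under-tensor bound for the trace-one norm that pins this quantity down, or (ii) observe that in the way this condition is actually invoked for the $DI$ and $CI$ measures the ancillary identity is always accompanied by a dephasing—produced by the $\Delta$ in the definitions together with the detection/creation incoherence of the processing channels $\cK$, $\cM$—so that the needed estimate collapses to the dephasing case just treated. Deciding which of these routes is intended, and in particular whether the plain identity-tensoring inequality can hold for $\|\cdot\|_1$ at all or must be routed through $\Delta$, is the crux of the argument.
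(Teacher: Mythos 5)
The paper offers no proof of this lemma at all --- it only remarks that it ``can be obtained from the proof of Theorem 9 in [TEZP19]'' --- so there is no internal argument to compare against. Your verifications of non-negativity, joint convexity, weak monotonicity, and monotonicity under tensoring with $\Delta$ are correct, and the block-diagonal additivity of the trace norm is exactly the right mechanism for the $\Delta$ case.

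Your suspicion about the remaining condition is not merely a difficulty of proof: the inequality $\|(\cN-\cM)\otimes I\|_1\le\|\cN-\cM\|_1$ is \emph{false} for the trace-one distance. Take $\cN=\mathrm{id}$ and $\cM(\rho)=\tfrac12\Tr(\rho)\,I$ on a qubit. Then $\|\cN-\cM\|_1=\max_{\rho}\|\rho-\tfrac12 I\|_1=1$, achieved at pure states, while feeding the maximally entangled state $\phi_{AE}$ into $(\cN-\cM)\otimes I_E$ gives $\|\phi-\tfrac14 I\|_1=\tfrac34+3\cdot\tfrac14=\tfrac32$. So your route (i) cannot succeed, and route (ii) is the correct diagnosis: in [TEZP19] the trace-one norm is used only for the DI measure, and in the monotonicity argument (step 5 of the theorem in this paper) the ancillary identity in the DI and CI cases always appears inside a dephasing, via $\Delta(\cN\otimes I)=\Delta\cN\otimes\Delta$, so only the $\Delta$-tensoring property is ever invoked there. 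The bare identity-tensoring property is needed only for $C_f^{DCI}$. Consequently the lemma as stated over-claims: the trace-one distance satisfies four of the five conditions in Notation \ref{def:f}, and the DCI trace-one measure is not actually supported by this lemma (one should pass to the diamond distance for that case, for which $\|\cN\otimes I-\cM\otimes I\|_\diamond=\|\cN-\cM\|_\diamond$ does hold). You should state this conclusion explicitly rather than leaving the choice between your two routes open; as written your proposal correctly isolates the crux but does not close it.
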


The proof of the following lemma is similar to the proof of the lemma above, but we provide it for the completeness sake below.
\begin{lemma}\label{lemma:RE}
Quantum divergence of channels (see Definition \ref{def:div}), $f(\cN,\cM)=D(\cN\|\cM)$, satisfy the above conditions for any divergence from Definition \ref{def:div-div}. In particular, the relative entropy of channels and the diamond-distance of channels are such functions.
\end{lemma}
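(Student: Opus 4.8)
The plan is to verify each of the five properties in Notation \ref{def:f} for the quantum divergence of channels $D(\cN\|\cM)$ by reducing each one to the corresponding property of the underlying state divergence $D$ from Definition \ref{def:div-div}, together with the list of properties of $D(\cN\|\cM)$ already recorded in the excerpt (non-negativity, weak monotonicity, strong monotonicity, joint convexity, and stability). Indeed, most of the work has effectively been done: the five channel-divergence properties listed after Definition \ref{def:div} match the five conditions of Notation \ref{def:f} almost verbatim, so the proof is largely a matter of matching them up and handling the one condition that is not stated identically.

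First I would dispose of the three easy matches. Non-negativity of $f(\cN,\cM)=D(\cN\|\cM)$, including the equality-iff-equal clause, is exactly the non-negativity property of the channel divergence. Weak monotonicity, $f(\cV\circ\cN\circ\cU,\cV\circ\cM\circ\cU)\le f(\cN,\cM)$, is precisely the weak monotonicity property $D(\cK_1\circ\cN\circ\cK_2\|\cK_1\circ\cM\circ\cK_2)\le D(\cN\|\cM)$ with $\cK_1=\cV$, $\cK_2=\cU$. Joint convexity of $f$ is identical to the stated joint convexity of $D(\cN\|\cM)$. The fourth condition, monotonicity under tensoring with the identity, $f(\cN\otimes I,\cM\otimes I)\le f(\cN,\cM)$, follows immediately from the stability property $D(\cN\otimes I\|\cM\otimes I)=D(\cN\|\cM)$, which gives equality and hence the desired inequality.

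The one condition requiring a genuine argument is the fifth: monotonicity under tensoring with the dephasing operator, $f(\cN\otimes\Delta,\cM\otimes\Delta)\le f(\cN,\cM)$. Here I would unfold Definition \ref{def:div} and estimate directly. Since $\Delta$ is a CPTP map (a quantum channel), I would write $\cN\otimes\Delta=(I\otimes\Delta)\circ(\cN\otimes I)$ and $\cM\otimes\Delta=(I\otimes\Delta)\circ(\cM\otimes I)$, so that tensoring with $\Delta$ is post-composition by the fixed channel $I\otimes\Delta$ applied to $\cN\otimes I$ and $\cM\otimes I$. Weak monotonicity of the channel divergence (post-processing by a channel does not increase $D$) then yields $D(\cN\otimes\Delta\|\cM\otimes\Delta)\le D(\cN\otimes I\|\cM\otimes I)$, and stability gives $D(\cN\otimes I\|\cM\otimes I)=D(\cN\|\cM)$, completing the bound. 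Alternatively, one can argue at the level of states: for any input $\rho_{AR'}$, the output of $(\cN\otimes\Delta)\otimes I$ is obtained from the output of $(\cN\otimes I)\otimes I$ by applying the channel $\Delta$ on the appropriate subsystem, and the data-processing inequality for the state divergence $D$ contracts the two outputs, so taking the supremum over inputs preserves the inequality.

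The main obstacle, such as it is, lies only in this fifth step, where one must correctly identify tensoring with $\Delta$ as a composition with a CPTP map so that data processing applies; the remaining properties are direct transcriptions of the already-established properties of the channel divergence. Finally, the ``in particular'' clause follows by specialization: the Umegaki relative entropy and the trace-distance are generalized convex divergences (by the Example), so the relative entropy of channels and the diamond-distance of channels are instances of $D(\cN\|\cM)$ and therefore satisfy all five conditions.
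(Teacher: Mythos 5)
Your proof is correct, and for the one nontrivial property it takes a genuinely different route from the paper. Like you, the paper dispatches the first four conditions by matching them to the already-recorded properties of the channel divergence and only proves monotonicity under tensoring with $\Delta$. But where you write $\cN\otimes\Delta=(I\otimes\Delta)\circ(\cN\otimes I)$ and invoke weak monotonicity (equivalently, data processing of the state divergence under the post-processing channel $I\otimes\Delta\otimes I$) followed by stability, the paper instead treats the dephasing as acting on the input, restricts the supremum to states of the form $(I\otimes\Delta\otimes I)(\rho)$, expands such a state explicitly as a classical-quantum state $\sum_i p_i\,\sigma_i\otimes\ket{i}\bra{i}$, and then uses joint convexity of the state divergence together with stability to peel off the classical register. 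Your argument is shorter and uses only data processing plus stability; the paper's argument avoids invoking the channel-level weak monotonicity as a black box and makes the classical-quantum structure of dephased inputs explicit, at the cost of the decomposition. Both are valid given the properties listed after Definition \ref{def:div}. One small notational slip: in your state-level alternative the input should be a state on $ACR$ (the system $C$ on which $\Delta$ acts must be part of the input), not $\rho_{AR'}$; this does not affect the argument.
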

\begin{proof}
We only need to check the monotonicity under tensor product with the dephasing operator, since other properties are satisfied by \cite{Y18}.

Let $\cN_{A\rightarrow B}, \cM_{A\rightarrow B}$ be two quantum channels. For a system $C$ with a fixed basis $\cE_C$, denote the completely dephasing operator (\ref{def:dephasing}) as $\Delta_C=\Delta_{\cE_C}$. Then
\begin{align}
D(\cN_{A\rightarrow B}\otimes\Delta_C\|\cM_{A\rightarrow B}\otimes\Delta_C)&=\max_{\rho_{ACR}}D(\cN_{A\rightarrow B}\otimes\Delta_C\otimes I_R(\rho_{ACR})\|\cM_{A\rightarrow B}\otimes\Delta_C\otimes I_R(\rho_{ACR}))\\
&=\max_\rho D([\cN\otimes I\otimes I][I\otimes \Delta\otimes I ](\rho)\|[\cM\otimes I\otimes I][I\otimes \Delta\otimes I] (\rho))\\
&=\max_{\sigma_{ACR}=I_A\otimes \Delta_C\otimes I_R(\rho)} D(\cN\otimes I\otimes I(\sigma)\|\cM\otimes I\otimes I(\sigma))\ .
\end{align}

Any state $\rho_{ACR}$ can be written as $\rho_{ACR}=\sum_{i,j,a,b}\alpha_{i,j,a,b}\ket{a}\bra{b}_{AR}\otimes\ket{i}\bra{j}_C$, where $\{\ket{i}\}$  (or $\{\ket{j}\}$) is the fixed basis of the operator $\Delta$. Then
\begin{align}
I_A\otimes \Delta_C\otimes I_R (\rho_{ACR})&=\sum_{i,a,b}\alpha_{i,i,a,b}\ket{a}\bra{b}_{AR}\otimes\ket{i}\bra{i}_C\\
&=:\sum_i p_i \, \sigma_i\otimes \ket{i}\bra{i}\ ,
\end{align}
here we denoted  $p_i\sigma_i:=\sum_{a,b}\alpha_{i,i,a,b}\ket{a}\bra{b}$ such that $\Tr\sigma_i=1$ and $\sum_ip_i=1$.

Therefore,
\begin{align}
D(\cN\otimes \Delta\|\cM\otimes \Delta)&=\max_{\sigma_{i_{AR}}, p_i} D\left(\sum_i p_i [\cN\otimes I](\sigma_i)_{AR}\otimes \ket{i}\bra{i}_C\|\sum_ip_i[\cM\otimes I](\sigma_i)_{AR}\otimes \ket{i}\bra{i}_C\right)\\
&\leq \max_{\sigma_{i_{AR}}, p_i} \sum_ip_i D([\cN\otimes I](\sigma_i)_{AR}\otimes \ket{i}\bra{i}_C\|[\cM\otimes I](\sigma_i)_{AR}\otimes \ket{i}\bra{i}_C)\\
&=\max_{\sigma_i, p_i} \sum_ip_i D(\cN\otimes I(\sigma_i)\|\cM\otimes I(\sigma_i))\\
&\leq \max_{\sigma_{AR}} D(\cN_{A\rightarrow B}\otimes I_R(\sigma_{AR})\|\cM_{A\rightarrow B}\otimes I_R(\sigma_{AR}))\\
&=D(\cN\|\cM)\ .
\end{align}

In the first inequality we used joint convexity of quantum relative entropy. In the second equality we used stability of the divergence.
\end{proof}

\begin{theorem}
Let $f$ be a function satisfying the properties in Notation \ref{def:f}. Consider three cases, when sets DI, CI, and DCI, (see Definition \ref{def:DCI}), are considered to be sets of free operations. Define:
\begin{align}
&C_f^{DI}(\cN)=\min_{\cK\in DI}f(\Delta\cN,\Delta\cK),\label{def:DI-f}\\
&C_f^{CI}(\cN)=\min_{\cK\in CI}f(\cN\Delta,\cK\Delta),\label{def:CI-f}\\
&C_f^{DCI}(\cN)=\min_{\cK\in DCI}f(\cN,\cK). \label{def:DCI-f}
\end{align}

These functionals are coherence measures of channels, i.e. they satisfy three conditions in Definition \ref{def:coh-meas-dyn}. The free superchannels are  of the type (\ref{eq:free-super-one}) with free channels taken from a corresponding set of free operations (DI, CI, or DCI).

\end{theorem}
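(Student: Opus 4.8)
The plan is to verify conditions (C1)--(C3) of Definition \ref{def:coh-meas-dyn} for each of the three functionals (\ref{def:DI-f})--(\ref{def:DCI-f}); since the three cases run in parallel, I would first establish the closure properties of the free classes that power a single template. Each of DI, CI, DCI is defined by a relation that is \emph{linear} in the channel (e.g.\ $\Delta_B\cN=\Delta_B\cN\Delta_A$ for DI), hence each set is convex, and each is closed under tensoring with the identity because $I_E$ is itself free and the dephasing on $AE$ factorizes as $\Delta_{AE}=\Delta_A\otimes\Delta_E$. The load-bearing fact is that each class is closed under composition: for DI one computes $\Delta_C\cN_2\cN_1=\Delta_C\cN_2\Delta_B\cN_1=\Delta_C\cN_2\Delta_B\cN_1\Delta_A=\Delta_C\cN_2\cN_1\Delta_A$, using the DI relation for $\cN_2$, then for $\cN_1$, then for $\cN_2$ again; the CI and DCI computations are analogous. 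Consequently, if $\cK^{*}$ is free and $\Lambda$ is a free superchannel (\ref{eq:free-super-one}) with free pre- and post-processing $\cM,\cK$, then $\Lambda(\cK^{*})=\cK\circ(\cK^{*}\otimes I_E)\circ\cM$ is again free.

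With these facts, (C1) and (C3) are immediate. Non-negativity of each $C_f$ follows from $f\geq 0$, and $C_f(\cN)=0$ for free $\cN$ by choosing the trial channel $\cK=\cN$, so that the two arguments of $f$ coincide. For convexity I would take optimal free channels $\cK_1,\cK_2$ for $\cN_1,\cN_2$, use that $\lambda\cK_1+(1-\lambda)\cK_2$ is free (convexity of the class), and combine the linearity of $\Delta$ with the joint convexity of $f$ from Notation \ref{def:f} to bound $C_f(\lambda\cN_1+(1-\lambda)\cN_2)$ by $\lambda C_f(\cN_1)+(1-\lambda)C_f(\cN_2)$.

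The substance is condition (C2). Let $\cK^{*}$ attain the minimum defining $C_f(\cN)$ and use the free channel $\Lambda(\cK^{*})$ as a trial in the minimization defining $C_f(\Lambda(\cN))$. For the DCI functional this gives directly $C_f^{DCI}(\Lambda(\cN))\leq f(\Lambda(\cN),\Lambda(\cK^{*}))\leq f(\cN\otimes I_E,\cK^{*}\otimes I_E)\leq f(\cN,\cK^{*})=C_f^{DCI}(\cN)$, by weak monotonicity (stripping off $\cK$ and $\cM$) followed by monotonicity under tensoring with the identity. The DI and CI cases need one extra manipulation because the dephasing sits inside $f$. For DI, I would use that the post-processing $\cK$ is DI to rewrite $\Delta_D\cK=\Delta_D\cK(\Delta_B\otimes\Delta_E)$, so that
\[
\Delta_D\Lambda(\cN)=\Delta_D\cK\big((\Delta_B\cN)\otimes\Delta_E\big)\cM,
\]
and likewise with $\cK^{*}$ in place of $\cN$; weak monotonicity then removes $\Delta_D\cK$ and $\cM$, and monotonicity under tensoring with the dephasing operator removes the $\Delta_E$ factor, leaving $f(\Delta_B\cN,\Delta_B\cK^{*})=C_f^{DI}(\cN)$. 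For CI the symmetric computation uses that the pre-processing $\cM$ is CI to rewrite $\cM\Delta_C=(\Delta_A\otimes\Delta_E)\cM\Delta_C$, turning $\Lambda(\cN)\Delta_C$ into $\cK\big((\cN\Delta_A)\otimes\Delta_E\big)\cM\Delta_C$, after which the same two monotonicity properties close the bound.

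The main obstacle I anticipate is precisely this bookkeeping of the dephasing inside $f$ in the DI and CI cases: one must invoke the defining relation of the \emph{correct} free channel — the post-processor $\cK$ for detection incoherence and the pre-processor $\cM$ for creation incoherence — to convert the external dephasing on $D$ (resp.\ $C$) into an internal $\Delta_B\otimes\Delta_E$ (resp.\ $\Delta_A\otimes\Delta_E$), and to check at each step that the objects fed into weak monotonicity (namely $\Delta_D\cK$, $\cM\Delta_C$, and $(\Delta_B\cN)\otimes\Delta_E$) are genuine quantum channels, so that the hypotheses of Notation \ref{def:f} genuinely apply.
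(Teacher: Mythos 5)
Your proposal is correct and follows essentially the same route as the paper: the same closure lemmas for DI/CI/DCI (composition, convexity, tensoring with the identity), the same use of joint convexity for (C3), and the same interplay of weak monotonicity with monotonicity under tensoring with $\Delta$ or $I$ for (C2). The only organizational difference is that you prove superchannel monotonicity in one pass on $\Lambda(\cN)=\cK\circ(\cN\otimes I_E)\circ\cM$ whereas the paper splits it into pre-composition, post-composition, and tensoring steps, and you prove only the direction of (C1) that Definition \ref{def:coh-meas-dyn} requires (the paper also establishes the faithfulness converse).
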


Note that one may think that all three coherence measures (\ref{def:DI-f})-(\ref{def:DCI-f}) can be viewed as the minimal distance to a free set, e.g. one can define $C(\cN)=\min_{\cK\in DI}F(\cN,\cK)$, with the distance function defined as 
$F(\cN,\cM)=f(\Delta\cN,\Delta\cK)$, and similarly for CI. However, the simplest property, $F(\cN,\cM)=0$ if and only if $\cN=\cM$, is not true. Therefore (\ref{def:DI-f}) and (\ref{def:CI-f}) are not distance measures.

\begin{proof} Denote the set of free operations DI, CI, or DCI as $\cC$.

1. If $\cN\in\cC$, then $C_f^\cC(\cN)=0$. On the other hand, if $C_f^\cC(\cN)=0$, then there exists $\cK\in\cC$ such that
\begin{itemize}
\item for $\cC=DI$: $\Delta\cN=\Delta\cK=\Delta\cK\Delta=\Delta\cN\Delta$, therefore $\cN\in DI$.
\item for $\cC=CI$: $\cN\Delta=\cK\Delta=\Delta\cK\Delta=\Delta\cN\Delta$, therefore $\cN\in CI$.
\item for $\cC=DCI$: $\cN=\cK\in DCI$.
\end{itemize}

2. {Note that a composition of two maps in $\cC$ is in $\cC$.} I.e. if $\cK_1, \cK_2\in DI$, then $\cK_1\cK_2\in DI$ since
\begin{align}
\Delta\cK_1\cK_2&=\Delta\cK_1\Delta\cK_2\\
&=\Delta\cK_1\Delta\cK_2\Delta\\
&=\Delta\cK_1\cK_2\Delta\ .
\end{align}
The first equality holds since $\cK_1\in DI$, i.e. $\Delta\cK_1=\Delta\cK_1\Delta$. The second equality holds since $\cK_2\in DI$, i.e. $\Delta\cK_2=\Delta\cK_2\Delta$. The last equality holds since again $\cK_1\in DI$ and therefore $\Delta\cK_1\Delta=\Delta\cK_1$.

Similarly, if $\cK_1, \cK_2\in CI$, then $\cK_1\cK_2\in CI$ since
\begin{align}
\cK_1\cK_2\Delta&=\cK_1\Delta\cK_2\Delta\\
&=\Delta\cK_1\Delta\cK_2\Delta\\
&=\Delta\cK_1\cK_2\Delta\ .
\end{align}
The first equality holds since $\cK_2\in CI$, i.e. $\cK_2\Delta=\Delta\cK_2\Delta$. The second equality holds since $\cK_1\in CI$, i.e. $\cK_1\Delta=\Delta\cK_1\Delta$. The last equality holds since again $\cK_2\in CI$ and therefore $\Delta\cK_2\Delta=\cK_2\Delta$.

Similarly, if $\cK_1, \cK_2\in DI$, then $\cK_1\cK_2\in DCI$ since
\begin{align}
\Delta\cK_1\cK_2&=\cK_1\Delta\cK_2\\
&=\cK_1\cK_2\Delta\ .
\end{align}
The first equality holds since $\cK_1\in DCI$, i.e. $\Delta\cK_1=\cK_1\Delta$. The second equality holds since $\cK_2\in DI$, i.e. $\Delta\cK_2=\cK_2\Delta$.

3. Let $\Phi\in DI$ be any channel in DI. Then
\begin{align}
C_f^{DI}(\cN\circ \Phi)&=\min_{\cK\in DI}f(\Delta\cN\Phi,\Delta\cK)\\
&\leq\min_{\cK'\in DI}f(\Delta\cN\Phi,\Delta\cK'\Phi)\\
&\leq\min_{\cK'\in DI}f(\Delta\cN,\Delta\cK')\\
&= C_f^{DI}(\cN)\ .
\end{align}
In the first inequality we used that the composition of DI maps is DI. In the second inequality we used that the function $f$ is weakly monotone for quantum channels.

Let $\Phi\in CI$ be any channel in CI. Then
\begin{align}
C_f^{CI}(\cN\circ \Phi)&=\min_{\cK\in CI}f(\cN\Phi\Delta,\cK\Delta)\\
&\leq\min_{\cK'\in CI}f(\cN\Phi\Delta,\cK'\Phi\Delta)\\
&=\min_{\cK'\in CI}f(\cN\Delta\Phi\Delta,\cK'\Delta\Phi\Delta)\\
&\leq\min_{\cK'\in CI}f(\cN\Delta,\cK'\Delta)\\
&= C_f^{CI}(\cN)\ .
\end{align}
In the first inequality we used that the composition of CI maps is CI. The last inequality is true, since the function $f$ is weakly monotone on quantum channels.

Let $\Phi\in DCI$ be any channel in DCI. Then
\begin{align}
C_f^{DCI}(\cN\circ \Phi)&=\min_{\cK\in DCI}f(\cN\Phi,\cK)\\
&\leq\min_{\cK'\in DCI}f(\cN\Phi,\cK'\Phi)\\
&\leq\min_{\cK'\in DCI}f(\cN,\cK')\\
&= C_f^{DCI}(\cN)\ .
\end{align}
In the first inequality we used that the composition of DCI maps is DCI. The last inequality is true, since the function $f$ is weakly monotone on quantum channels.

4.  Let $\Phi\in DI$. Then
\begin{align}
C_f^{DI}(\Phi\circ \cN)&=\min_{\cK\in DI}f(\Delta\Phi\cN,\Delta\cK)\\
&\leq \min_{\cK'\in DI}f(\Delta\Phi\cN,\Delta\Phi\cK')\\
&= \min_{\cK'\in DI}f(\Delta\Phi\Delta\cN,\Delta\Phi\Delta\cK')\\
&\leq \min_{\cK'\in DI}f(\Delta\cN,\Delta\cK')\\
&=C_f^{DI}(\cN)\ .
\end{align} 
The last inequality is true, since $f$ is weakly monotone.

Let $\Phi\in CI$. Then
\begin{align}
C_f^{CI}(\Phi\circ \cN)&=\min_{\cK\in CI}f(\Phi\cN\Delta,\cK\Delta)\\
&\leq \min_{\cK'\in CI}f(\Phi\cN\Delta,\Phi\cK'\Delta)\\
&\leq \min_{\cK'\in CI}f(\cN{\Delta},\cK'{\Delta})\\
&=C_f^{CI}(\cN)\ .
\end{align} 
The last inequality is true, since $f$  is weakly monotone.

Let $\Phi\in DCI$. Then
\begin{align}
C_f^{DCI}(\Phi\circ \cN)&=\min_{\cK\in DCI}f(\Phi\cN,\cK)\\
&\leq \min_{\cK'\in DCI}f(\Phi\cN,\Phi\cK')\\
&\leq \min_{\cK'\in DCI}f(\cN,\cK')\\
&=C_f^{DCI}(\cN)\ .
\end{align} 
The last inequality is true, since $f$ is weakly monotone.

5. Consider
\begin{align}
C_f^{DI}(\cN\otimes I)&=\min_{\cK\in DI}f(\Delta(\cN\otimes I),\Delta\cK)\\
&\leq \min_{\cK=\cK'\otimes I\in DI}f(\Delta(\cN\otimes I),\Delta(\cK'\otimes I))\\
&=\min_{\cK'\in DI}f(\Delta\cN\otimes \Delta,\Delta\cK'\otimes \Delta)\\
&\leq \min_{\cK'\in DI}f(\Delta\cN,\Delta\cK')\\
&=C_f^{DI}(\cN)\ .
\end{align}
The last inequality holds since $f$ is monotone under tensor product with the dephasing operator.

Similarly,
\begin{align}
C_f^{CI}(\cN\otimes I)&=\min_{\cK\in CI}f((\cN\otimes I)\Delta,\cK\Delta)\\
&\leq \min_{\cK=\cK'\otimes I\in CI}f((\cN\otimes I)\Delta,(\cK'\otimes I)\Delta)\\
&=\min_{\cK'\in CI}f(\cN\Delta\otimes \Delta,\cK'\Delta\otimes \Delta)\\
&\leq \min_{\cK'\in CI}f(\cN\Delta,\cK'\Delta)\\
&=C_f^{CI}(\cN)\ .
\end{align}

Also,
\begin{align}
C_f^{DCI}(\cN\otimes I)&=\min_{\cK\in DCI}f(\cN\otimes I,\cK)\\
&\leq \min_{\cK=\cK'\otimes I\in DCI}f(\cN\otimes I,\cK'\otimes I)\\
&\leq \min_{\cK'\in DCI}f(\cN,\cK')\\
&=C_f^{DCI}(\cN)\ .
\end{align}
In the last inequality we used that $f$ is monotone under tensor product with the identity operator.

From 2-5, for any free quantum superchannel $\Lambda(\cN_{A\rightarrow B})=\Phi_{BE\rightarrow D}\circ(\cN_{A\rightarrow B}\otimes I_{E})\circ\Psi_{C\rightarrow AE}$ with free $\Phi, \Psi\in \cC$, all coherence measures are monotone,
$$C_f^\cC(\Lambda(\cN))\leq C_f^\cC(\cN)\ . $$

6. Let $0\leq \lambda\leq 1$. Denote $\cK_1\in DI$ and $\cK_2\in DI$ as channels such that
$$C_f^{DI}(\cN)=f(\Delta\cN, \Delta\cK_1)\ ,\ \ \  C_f^{DI}(\cK)=f(\Delta\cK,\Delta\cK_2)\ .$$ 
Then
\begin{align}
C_f^{DI}(\lambda\cN+(1-\lambda)\cK )&=\min_{\Phi\in DI}f\left(\Delta(\lambda\cN+(1-\lambda)\cK ),\Delta\Phi\right)\\
&\leq f\left(\Delta(\lambda\cN+(1-\lambda)\cK ),\Delta(\lambda\cK_1+(1-\lambda)\cK_2)\right)\\
&=f\left(\lambda \Delta\cN+(1-\lambda)\Delta\cK ,\lambda\Delta\cK_1+(1-\lambda)\Delta\cK_2\right)\\
&\leq \lambda f(\Delta\cN, \Delta\cK_1)+(1-\lambda)f(\Delta\cK,\Delta\cK_2)\\
&=\lambda C_f^{DI}(\cN)+(1-\lambda)C_f^{DI}(\cK)\ .
\end{align}
Here, in the second inequality we used the joint convexity of $f$.

{Similar results are straightforward for CI and DCI.}
\end{proof}

\vspace{0.3in}
\textbf{Acknowledgments.}  A. V. is supported by NSF grant DMS-2105583.
\vspace{0.3in}

\bibliography{Bibliography-Vershynina}{}

\begin{thebibliography}{10}

\bibitem{A14}
J.~Aberg.
\newblock Catalytic coherence.
\newblock {\em Phys. Rev. Lett.}, 113:150402, 2014.

\bibitem{BC14}
T.~Baumgratz, M.~Cramer, and M.~B. Plenio.
\newblock Quantifying coherence.
\newblock {\em Physical review letters}, 113(14):140401, 2014.

\bibitem{ChDAP08}
G.~Chiribella, G.~M. D'Ariano, and P.~Perinotti.
\newblock Transforming quantum operations: Quantum supermaps.
\newblock {\em Europhysics Letters}, 83(3):30004, 2008.

\bibitem{ChG16}
E.~Chitambar and G.~Gour.
\newblock Comparison of incoherent operations and measures of coherence.
\newblock {\em Physical Review A}, 94(5):052336, 2016.

\bibitem{ChG16-2}
E.~Chitambar and G.~Gour.
\newblock Critical examination of incoherent operations and a physically
  consistent resource theory of quantum coherence.
\newblock {\em Physical review letters}, 117(3):030401, 2016.

\bibitem{CH16}
E.~Chitambar and M.~H. Hsieh.
\newblock Relating the resource theories of entanglement and quantum coherence.
\newblock {\em Physical review letters}, 117(2):020402, 2016.

\bibitem{CMM18}
T.~Cooney, M.~Mosonyi, and M.~M. Wilde.
\newblock Strong converse exponents for a quantum channel discrimination
  problem and quantum-feedback-assisted communication.
\newblock {\em Communications in Mathematical Physics}, 344(3):797--829, 2016.

\bibitem{C15}
P.~Cwikliski, M.~Studziski, M.~Horodecki, and J.~Oppenheim.
\newblock Limitations on the evolution of quantum coherences: towards fully
  quantum second laws of thermodynamics.
\newblock {\em Physical review letters}, 115(21):210403, 2015.

\bibitem{DevS16}
J.~I. De~Vicente and A.~Streltsov.
\newblock Genuine quantum coherence.
\newblock {\em Journal of Physics A: Mathematical and Theoretical},
  50(4):045301, 2016.

\bibitem{DBG15}
S.~Du, Z.~Bai, and Y.~Guo.
\newblock Conditions for coherence transformations under incoherent operations.
\newblock {\em Physical Review A}, 91(5):052120, 2015.

\bibitem{G63}
R.~J. Glauber.
\newblock Coherent and incoherent states of the radiation field.
\newblock {\em Physical Review}, 131(6):2766, 1963.

\bibitem{GW21}
G.~Gour and M.~M. Wilde.
\newblock Entropy of a quantum channel.
\newblock {\em Physical Review Research}, 3(2):023096, 2021.

\bibitem{HH18}
M.~L. Hu, X.~Hu, J.~Wang, Y.~Peng, Y.~R. Zhang, and H.~Fan.
\newblock Quantum coherence and geometric quantum discord.
\newblock {\em Physics Reports}, 762:1--100, 2018.

\bibitem{LKDW18}
F.~Leditzky, E.~Kaur, N.~Datta, and M.~M. Wilde.
\newblock Approaches for approximate additivity of the holevo information of
  quantum channels.
\newblock {\em Physical Review A}, 97(1):012332, 2018.

\bibitem{LY20}
Y.~Liu and X.~Yuan.
\newblock Operational resource theory of quantum channels.
\newblock {\em Physical Review Research}, 2(1):012035, 2020.

\bibitem{L15}
M.~Lostaglio and T.~Jennings, D.~Rudolph.
\newblock Description of quantum coherence in thermodynamic processes requires
  constraints beyond free energy.
\newblock {\em Nat. Commun.}, 6:6383, 2015.

\bibitem{MY16}
J.~Ma, B.~Yadin, D.~Girolami, V.~Vedral, and M.~Gu.
\newblock Converting coherence to quantum correlations.
\newblock {\em Physical review letters}, 116(16):160407, 2016.

\bibitem{RPJ16}
C.~Radhakrishnan, M.~Parthasarathy, S.~Jambulingam, and T.~Byrnes.
\newblock Distribution of quantum coherence in multipartite systems.
\newblock {\em Physical review letters}, 116(15):150504, 2016.

\bibitem{RPL16}
S.~Rana, P.~Parashar, and M.~Lewenstein.
\newblock Trace-distance measure of coherence.
\newblock {\em Physical Review A}, 93(1):012110, 2016.

\bibitem{R16}
A.~E. Rastegin.
\newblock Quantum-coherence quantifiers based on the tsallis relative $\alpha$
  entropies.
\newblock {\em Physical Review A}, 93(3):032136, 2016.

\bibitem{RM09}
P.~Rebentrost, M.~Mohseni, and A.~Aspuru-Guzik.
\newblock Role of quantum coherence and environmental fluctuations in
  chromophoric energy transport.
\newblock {\em J. Phys. Chem. B}, 113:9942, 2009.

\bibitem{SCG20}
G.~Saxena, E.~Chitambar, and G.~Gour.
\newblock Dynamical resource theory of quantum coherence.
\newblock {\em Physical Review Research}, 2(2):023298, 2020.

\bibitem{SZ97}
M.~O. Scully and M.~S. Zubairy.
\newblock Quantum optics.
\newblock {\em (Cambridge.) Ch}, 4:17, 1997.

\bibitem{SX15}
L.~H. Shao, Z.~Xi, H.~Fan, and Y.~Li.
\newblock Fidelity and trace-norm distances for quantifying coherence.
\newblock {\em Physical Review A}, 91(4):042120, 2015.

\bibitem{SAP17}
A.~Streltsov, G.~Adesso, and M.~B. Plenio.
\newblock Colloquium: Quantum coherence as a resource.
\newblock {\em Reviews of Modern Physics}, 89(4):041003, 2017.

\bibitem{SS15}
A.~Streltsov, U.~Singh, H.~S. Dhar, M.~N. Bera, and G.~Adesso.
\newblock Measuring quantum coherence with entanglement.
\newblock {\em Physical review letters}, 115(2):020403, 2015.

\bibitem{TK16}
K.~C. Tan, H.~Kwon, C.~Y. Park, and H.~Jeong.
\newblock Unified view of quantum correlations and quantum coherence.
\newblock {\em Physical Review A}, 94(2):022329, 2016.

\bibitem{TEZP19}
T.~Theurer, D.~Egloff, L.~Zhang, and M.~B. Plenio.
\newblock Quantifying operations with an application to coherence.
\newblock {\em Physical review letters}, 122(19):190405, 2019.

\bibitem{V22}
A.~Vershynina.
\newblock Measure of genuine coherence based of quasi-relative entropy.
\newblock {\em Quantum Information Processing}, 21(5):1--22, 2022.

\bibitem{V23}
A.~Vershynina.
\newblock Coherence as entropy increment for tsallis and r{\'e}nyi entropies.
\newblock {\em Quantum Information Processing}, 22(2):127, 2023.

\bibitem{Wat18}
J.~Watrous.
\newblock {\em The theory of quantum information}.
\newblock university press, Cambridge, 2018.

\bibitem{WWY14}
M.~M. Wilde, A.~Winter, and D.~Yang.
\newblock Strong converse for the classical capacity of entanglement-breaking
  and hadamard channels via a sandwiched r{\'e}nyi relative entropy.
\newblock {\em Communications in Mathematical Physics}, 331:593--622, 2014.

\bibitem{WY16}
A.~Winter and D.~Yang.
\newblock Operational resource theory of coherence.
\newblock {\em Physical review letters}, 116(12):120404, 2016.

\bibitem{WM13}
B.~Witt and F.~Mintert.
\newblock Stationary quantum coherence and transport in disordered networks.
\newblock {\em New J. Phys.}, 15:093020, 2013.

\bibitem{YZ16}
X.~D. Yu, D.~J. Zhang, G.~F. Xu, and D.~M. Tong.
\newblock Alternative framework for quantifying coherence.
\newblock {\em Physical Review A}, 94(6):060302, 2016.

\bibitem{Y18}
X.~Yuan.
\newblock Relative entropies of quantum channels with applications in resource
  theory.
\newblock arXiv preprint, 2018.

\end{thebibliography}
\bibliographystyle{plain}

\end{document}